\newcommand{\RR}{\mathbb R}
\newtheorem*{huygens}{Huygens Theorem}
\newtheorem{theorem}{Theorem}
\newtheorem{proposition}[theorem]{Proposition}
\newtheorem{lemma}[theorem]{Lemma}
\begin{document}

\title[Huygens' envelope principle in Finsler spaces]{Huygens' envelope principle in Finsler spaces\\ and analogue
gravity}

\author{Hengameh R. Dehkordi$^1$}
\address{$^1$Instituto de Matem\'{a}tica e Estat\'{\i}stica,
Universidade de S\~{a}o Paulo, \\
05508-090 S\~{a}o Paulo, SP, Brazil}
\ead{hengamehraeesi@gmail.com, hengameh@ime.usp.br}

\author{Alberto Saa$^2$}%
\address{$^2$Departmento de Matemática Aplicada, 
 Universidade Estadual de Campinas, \\
 13083-859 Campinas, SP, Brazil}
  \ead{asaa@ime.unicamp.br}

\begin{abstract}
We extend to the $n$-dimensional case a recent theorem establishing the validity of
the Huygens' envelope principle for wavefronts in Finsler spaces. Our results have direct applications in   analogue gravity models, for which the 
Fermat's principle of least time  naturally gives origin to an underlying Finslerian geometry. 
For the sake of  illustration, we consider two explicit examples motivated by recent
experimental results: surface waves in flumes and vortices. For both examples, we have distinctive 
directional spacetime structures, namely horizons and ergospheres, respectively. We show that 
both structures are associated with certain directional divergences in the
underlying  Finslerian (Randers) geometry. Our results show that Finsler geometry may
provide a fresh view on the causal structure of spacetime, not only in analogue models but also for General Relativity. 
\end{abstract}

\noindent{\it Keywords}: Finsler geometry,   Huygens' Principle, causal structure, analogue gravity\\

\submitto{\CQG}

\section{Introduction}

Wave propagation in non-homogeneous and anisotropic  media has attracted a lot of attention recently 
in the context of analogue gravity. (For a comprehensive review on the subject, see \cite{Rev}.)  Many interesting results  have been obtained, for instance, by
observing surface waves in some specific  fluid flows, especially those ones corresponding to
analogue black holes, {\em i.e.}, flows exhibiting an effective horizon  for wave propagation \cite{A1,A2,A3,A4}. 
Fluid configurations involving vortices,  which could  in some situations  exhibit     effective ergospheres,  have also been investigated \cite{A5,A6,A7,A8}. The key idea, which
can be traced back to the seminal work \cite{Unruh} of W. Unruh in the early eighties, is the observation
that   generic perturbations $\phi$ in a perfect  fluid of density $\rho$ and with a velocity   field $V = (v^1,v^2,v^3)$ are effectively governed by the Klein-Gordon equation
\begin{equation}
\label{KG}
\frac{1}{\sqrt{-g}}\partial_a \sqrt{-g}g^{ab}\partial_b \phi = 0,
\end{equation}
where $a,b=0,1,2,3$, with the effective metric $g_{ab}$   given by
\begin{equation}
\label{effective}
ds^2 = g_{ab}dx^a dx^b = \frac{\rho}{c}\left( -c^2dt^2 + \delta_{ij}\left(dx^i + v^i dt \right) \left(dx^j + v^j dt \right)\right),
\end{equation}
where
$i,j = 1,2,3$, and $\delta_{ij}$ stands for the usual Kronecker symbol. In general configurations, both the perturbation propagation velocity $c$ and the flow velocity field $V$ can indeed depend upon  space and time, but we are only  concerned here with the stationary situations, {\em i.e.},   the cases   $c=c(x)$ and $V=V(x)$. The spacetime  hypersurfaces corresponding to $c^2 = V^2$, where
$V^2 = v_iv^i = \delta_{ij}v^i v^j$, mimic, from the kinematic point of view, many distinctive properties of the Killing horizons in General Relativity (GR) \cite{Rev}, and this fact is  precisely  the starting point of many interesting analogue gravity studies. For a review on the causal structure of
analogue gravity models, see \cite{Horiz}. The region where $c^2 > V^2$ is the analogue of the exterior region
 of a black hole in GR, where the observers are expected to live. The
null geodesics of (\ref{effective}) correspond to the characteristic curves of the hyperbolic partial 
differential equation (\ref{KG}) and, hence, they play a central role in   the time evolution of their solutions.

For the null geodesics of (\ref{effective}). {\em i.e.}   the curves such that $ds^2=0$,  we
have in the exterior region  
\begin{equation}
\label{Randers}
dt = F(x,dx^i) = \sqrt{a_{ij}(x)dx^i dx^j} + b_i(x)dx^i ,
\end{equation}
where
\begin{equation}
a_{ij} = \frac{(c^2 - V^2)\delta_{ij}+v_iv_j}{(c^2 - V^2)^2} \quad {\rm and} \quad
b_i = \frac{v_i}{c^2 - V^2}.
\end{equation}
Notice that  the formulation 
  (\ref{Randers}) of the null geodesics is, in fact, equivalent to 
  the Fermat's principle of least time, in the sense that the $x^i(s)$ spatial curves
minimizing the time interval $\int dt$ correspond to null geodesics of the original four-dimensional spacetime metric (\ref{effective}).
 On the other hand, the metric defined by (\ref{Randers}) is an explicit example of a   well-known structure in Finsler geometry called Randers metric. Its  striking difference when compared with the usual
 Riemannian metric is that, for $b_i\ne 0$, $dt^2$ is not a quadratic form in $dx^i$, implying
 many distinctive properties for the underlying geometry as, for instance, that 
$ F(x,dx^i) \ne F(x,-dx^i)$, leading to widespread assertion  that, in general, distances do depend
on   directions in Finsler geometries. 
  For some 
 interesting historical notes on this matter, see \cite{Chern}. General Relativity has some
 emblematic  examples of directional spacetime structure as, for instance, event horizons, {\em i.e.} (null)-hypersurfaces which can be crossed only in one direction. It is hardly a surprise that Finsler
 geometry turns out to be relevant for these issues, but the Finslerian description of such spacetime structures from the physical point of view is still a rather incipient program. The present paper is a small step towards such a wider goal.

Finsler geometry is a centenary topic in Mathematics \cite{Finsler}, with a quite large 
accumulated literature. 
The recente review \cite{Shen} covers all   pertinent concepts for our purposes here.  Since
its early days, Finsler geometry has been applied in several contexts, ranging from the already classical control problem known as the Zermelo's navigation problem (see \cite{bao2004zermelo} for a recent approach and  further references) to recent applications in the Physics of graphene \cite{cvetivc2012graphene}. 
 The propagation of wavefronts in different situations and the description of some causal structures of the underlying 
 spacetime from a Finslerian point of view, which indeed are the main topics of the present paper, have been already considered in  \cite{anderson1982modelling,gibbons2009stationary,gibbons2011geometry,
 javaloyes2012conformally,M1,markvorsen2016finsler,M2}. In particular, Markvorsen
proved  in \cite{markvorsen2016finsler} that  the Huygens' envelope principle for wavefronts holds for generic two-dimensional Finsler geometries. Some clarifications are necessary here. We call as the   Huygens' envelope principle for wavefronts the following statement, which is presented as the Huygen's theorem in,
for instance, \cite{arnol2013mathematical}.  
\begin{huygens} 
	Let $\phi_p(t)$ be a wavefront,  which started at the point $p$, after time $t$. For every point $q$ of this wavefront, consider the wavefront after time $s$, i.e. $\phi_q(s)$. Then, the wavefront of the point $p$ after time $s+t$, $\phi_p(s+t)$, will be the envelope of the wavefronts $\phi_q(s)$, for every $q\in\phi_p(t)$.
\end{huygens}
\noindent Such property is rather generic and it is  valid, for instance, for any kind of linear waves in flat spacetime.
Indeed, it was proved in \cite{arnol2013mathematical} for all waves obeying the Fermat's principle of least time
in Euclidean space. It is also verified,  in particular, for the solutions of the Klein-Gordon equation (\ref{KG}) in flat spacetimes of any dimension.
The very fundamental concept of   light cone in Relativity is heuristically constructed from this kind of wavefront
propagation, for which the Huygens' envelope principle is expected to hold on physical grounds. 
 However, such principle should not be confused with the more stringent and restrictive   Huygens' principle which implies that,   besides of the property of the envelope of the wavefronts,   the wave propagation occurs sharply only along the characteristic curves, implying, in particular, the absence of wave tails. Such more restrictive Huygens' principle   is verified, for instance,  for the solutions of the Klein-Gordon equation (\ref{KG}) in flat spacetimes only for odd spatial dimensions. For further details on this issue, see \cite{huygensBook}. 
Provided that the wavefronts  satisfy  the Huygens' envelope principle, one can determine   the time
evolution of the wavefronts for $t>t_0$ once we know the wavefront at $t=t_0$. In this sense,
the behavior of the propagation is completely predictable solely with the information of the wavefront 
at a given time. It is important to stress that one cannot take for granted Huygens' theorem in
a Finslerian framework due to inherent  intricacies of the   geodesic flow. 

In this paper, we will explore some recent mathematical results \cite{raeisidehkordi2018finsler,alexandrino2018finsler} to present a novel proof extending, for the $n$-dimensional case, the Markvorsen result on the   Huygens' envelope principle in generic Finsler spaces. Moreover, we show, by means of 
some explicit examples in analogue gravity, that the Finslerian formulation of the wavefront propagation  
in terms 
of a Randers metric can provide useful insights on the causal structure of the underlying spacetime. 
In particular, we show that the distinctive directional properties of analogue horizons and ergospheres 
have a very natural description in terms of Finsler geometry. In principle, the same Finslerian description would be also available for General Relativity.

We will start, in the next section, with a brief  review on the main mathematical definitions and properties of Finsler
spaces and Randers metrics. Section 3 is devoted to the new proof of the Huygens' theorem and to the discussion of 
  some
generic properties of the geodesic flow and wavefront propagation in $n$-di\-men\-sional Finsler spaces. The two explicit examples motivated by the common hydrodynamic  
analogue models, the cases of surface waves in flumes and
vortices, are presented in Section 4.
The last section is left for some concluding remarks
on the relation between the causal structure of 
spacetimes  and the Finslerian structure of the underlying geometry associated with the
Fermat's principle of least time.
	
\section{Geometrical Preliminaries  }
\label{prelim}

For the sake of completeness, we will present here a brief review on Finsler geometry and the Randers metric,
with emphasis on the notion of transnormality\cite{raeisidehkordi2018finsler,alexandrino2018finsler},
which  will be central in our proof of Huygens' theorem. 
For further definitions and references, see \cite{Shen}. 

Let $V$ be a real finite-dimensional vector space. A non-negative function $F:V \to [0 , \infty)$ is called a Minkowski norm if  the following properties hold:
	\begin{enumerate}
		\item  $F$ is smooth on $V\backslash \{0\}$,
		\item $F$ is positive homogeneous of degree 1, that is $F(\lambda y) = \lambda F(y)$ for every $\lambda > 0$,
		\item for each $y\in V\backslash \{0\}$, the fundamental tensor $g_y$, which is the symmetric bilinear form defined as  
		\begin{equation}\label{fun.form}
		g_y(u,v) = \frac{1}{2}\left( \frac{\partial^2}{\partial t \partial s}F^2(y+tu +sv) \right)_{s=t=0}, 
		\end{equation}	
		is positive definite on $V$.
	\end{enumerate}
The pair $(V,F)$ is usually called a Minkowski space in Finsler geometry literature, and this, in principle, might cause some confusion with the distinct notion of Minkowski spacetime. Here, we will adopt the Finsler geometry standard denomination and no confusion should arise since we do not mix the two different spaces. 
Given a Minkowski space, the indicatrix of $F$ is
the unitary geometric sphere in $(V,F)$, {\em i.e.}, the
 subset 
\begin{eqnarray}
\mathcal{I}=\{v\in V\ |\ F(v)=1\}.
\end{eqnarray}
The indicatrix $\mathcal I$ defines a hypersurface (co-dimension 1) in $(V,F)$ 
consisting of the collection of the endpoints of   unit tangent vectors. In contrast with the Euclidean
case, where the $\mathcal I$ is always a sphere, it can be a rather generic surface in 
a Minkowski space. 
We are now  ready to introduce the notion of a Finslerian structure on a manifold. Let $M$ be an $n$-dimensional
differentiable manifold and $TM$ its tangent bundle. A Finsler structure on $M$ is a 
 function $F:TM\to [0,\infty)$ with the following properties:
	\begin{enumerate}
		\item  $F$ is smooth on $TM\backslash \{0\}$,
		\item  For each $x\in M, \ F_x =F|_{T_xM}$ is a Minkowski norm on $T_xM.$ 
	\end{enumerate}
The pair $(M,F)$ is called a Finsler space.
Suppose now that $M$ is a Riemannian manifold endowed with metric $\alpha :{TM}\times{TM} \to [0,\infty)$
and  a 1-form $\beta:{TM}\to\mathbb{R}$
such that $\alpha(y_\beta,y_\beta)<1$,
with $y_\beta$ standing for the vector dual of $\beta$. In this case,  $F=\alpha+\beta$ is a particular Finsler structure 
called {Randers metric} on $M$, and in this case the pair $(M,F)$ is   called a  {Randers space}.

It is interesting to notice that every Randers metric is associated with a Zermelo's navigation
problem \cite{bao2004zermelo}. Such a problem is defined on 
  a Riemannian space $(M,h)$ with  a smooth vector field (wind)  $W$  such that   $h(W,W) < 1$.  The associated Randers metric corresponding to the solution of a Zermelo's   navigation  problem is given by 
\begin{equation}
\label{Randers1}
  F(y)= \alpha(y) + \beta(y)= \frac{\sqrt{h^2(W,y)+\lambda h(y,y)}} {\lambda}- \frac{h(W,y)}{\lambda}   
 \end{equation} 
where $\lambda =1-h(W,W) $. Comparing with (\ref{Randers}), one can easily establish a conversion between the so-called Zermelo data $(M,h,W)$ of a Randers space and the analogue gravity quantities $c$, $V$, and $\delta_{ij}$.

Given a Finsler space $(M,F)$, the gradient $\nabla f_p$ of a 
 smooth function $f:M\to\RR$  at point $p\in M$
  is defined as
  \begin{equation}
  \label{nabla}
df_p(v)=g_{\nabla f_p}(\nabla f_p, v),
  \end{equation}  
where $v \in T_pM$ and 
\begin{equation}
g_y(y,v)=\frac{1}{2}\left(\frac{\partial }{\partial s} F^2(y+sv) \right)_{s=0},
\end{equation}
 which is the fundamental tensor of $F$ at $y\in T_pM$ (see \cite{javaloyes2011definition} for more details). It is important to stress that, in Randers spaces, where a Riemannian structure is
also always available, the gradient (\ref{nabla}) differs  from the usual
 Riemannian gradient $\tilde\nabla f_p$ at $p\in M$, unless the vector field $W$ vanishes. The following Lemma, which proof can be found in \cite{alexandrino2018finsler}, connects the two gradients in a very useful way, since the
 direct calculation of $ \nabla f_p$ 
 is sometimes rather tricky. 
 \begin{lemma}
 \label{nablas} 
 Let $f:U\subset M \to \mathbb{R}$ be a smooth function without critical points,  
   $(M,F)$   a Randers space with Zermelo data $(M,h,W)$, and
$\nabla f_p$ and $\tilde\nabla f_p$, respectively, the gradients with respect to $F$ and to $h$
at $p\in M$. Then
\begin{enumerate}
\item $\displaystyle \frac{||\tilde\nabla f_p||}{F(\nabla f_p)}\left(\nabla f_p -
F(\nabla f_p)W \right) = \tilde\nabla f_p$,
\item  $\displaystyle F(\nabla f_p) = ||\tilde\nabla f_p|| + df(W) $,
\end{enumerate} 
where $||y||^2 = h(y,y).$
 \end{lemma}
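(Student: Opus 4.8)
The plan is to translate both claims into two elementary facts — the geometric description of a Randers indicatrix in terms of the Zermelo data, and the variational characterization of the Finsler gradient — so that no explicit computation of the fundamental tensor $g_y$ is required.

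First I would record the standard description of the indicatrix. Squaring and rearranging (\ref{Randers1}) shows that, at each $p\in M$, the value $t=F(y)$ is the unique positive root of $\lambda t^{2}+2h(y,W)t-h(y,y)=0$, and this quadratic is nothing but the identity $h(y-tW,\,y-tW)=t^{2}$ evaluated at $t=F(y)$. Hence, by $1$-homogeneity, the indicatrix $\mathcal{I}_p=\{v\in T_pM:F(v)=1\}$ is exactly the $h$-unit sphere translated by the wind, $\mathcal{I}_p=\{u+W_p:h(u,u)=1\}$ (see also \cite{bao2004zermelo}). The hypothesis $h(W,W)<1$ guarantees that the origin lies inside this translated sphere, so $\mathcal{I}_p$ is still a strictly convex hypersurface.

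Next I would use the convex-duality description of the gradient. Since $f$ has no critical points, $\xi:=df_p\neq 0$, and the fundamental inequality $g_y(y,v)\le F(y)F(v)$ of Finsler geometry \cite{Shen}, applied to $y=\nabla f_p$, gives $\xi(v)=g_{\nabla f_p}(\nabla f_p,v)\le F(\nabla f_p)$ for every $v\in\mathcal{I}_p$, with equality — by strict convexity — precisely when $v=\nabla f_p/F(\nabla f_p)$; moreover the maximal value equals $F(\nabla f_p)$ because $df_p(\nabla f_p)=g_{\nabla f_p}(\nabla f_p,\nabla f_p)=F(\nabla f_p)^{2}$. So $\nabla f_p/F(\nabla f_p)$ is the unique maximizer of $df_p$ over $\mathcal{I}_p$, with maximum $F(\nabla f_p)$. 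Now I evaluate that maximum using the previous step: on $\mathcal{I}_p=\{u+W:h(u,u)=1\}$ one has $\max\,df_p(u+W)=df_p(W)+\max_{h(u,u)=1}df_p(u)=df(W)+\|\tilde\nabla f_p\|$, the inner maximum being attained at $u=\tilde\nabla f_p/\|\tilde\nabla f_p\|$ since $df_p(u)=h(\tilde\nabla f_p,u)$. Equating the two expressions for the maximum yields item (ii), $F(\nabla f_p)=\|\tilde\nabla f_p\|+df(W)$; matching the maximizers yields $\nabla f_p/F(\nabla f_p)=\tilde\nabla f_p/\|\tilde\nabla f_p\|+W$, which is exactly item (i) after multiplying through by $\|\tilde\nabla f_p\|$.

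The computational content is minimal — essentially one quadratic and one linear optimization over a sphere — and the only points that demand care, hence the main obstacles, are the two facts imported as black boxes: the equivalence between the Zermelo presentation (\ref{Randers1}) and the ``translated sphere'' form of the indicatrix, and the fundamental inequality together with strict convexity, which is what makes $\nabla f_p/F(\nabla f_p)$ the \emph{unique} support point of $\mathcal{I}_p$ in the codirection $df_p$; here $h(W,W)<1$ is precisely the condition keeping $\mathcal{I}_p$ strictly convex and enclosing the origin. A fully self-contained alternative would instead write $g_y$ explicitly from $F=\alpha+\beta$ and verify (i)--(ii) by direct substitution; I would keep that as a fallback, but the duality argument above is shorter and geometrically more transparent, and it makes transparent why (ii) is simply (i) contracted with $df_p$.
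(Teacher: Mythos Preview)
Your argument is correct and self-contained. The paper does not actually prove this lemma --- it only cites \cite{alexandrino2018finsler} --- so there is no in-text proof to compare against. Your route through the Zermelo description of the indicatrix (namely, $F(v)=1$ if and only if $h(v-W,v-W)=1$) together with the support-point characterization of the Finsler gradient (the unique maximizer of $df_p$ on $\mathcal I_p$, justified by the fundamental inequality $g_y(y,v)\le F(y)F(v)$ and strict convexity) is the natural convex-duality argument and is almost certainly close in spirit to what the cited reference does. The two ingredients you import as black boxes --- the translated-sphere form of the Randers indicatrix and the fundamental inequality with its equality case --- are standard and correctly invoked; the no-critical-points hypothesis is precisely what guarantees $\nabla f_p\ne 0$ and $\tilde\nabla f_p\ne 0$, so both the Finsler inequality and the Riemannian Cauchy--Schwarz step apply and have unique maximizers. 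Your proof could stand in place of the citation.
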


 If $L$ is a submanifold of a 
 Finsler space $(M,F)$, a non-zero vector $y\in T_pM$ will be orthogonal to $L$ at $p$ if $g_y(y,v)=0$  for every $v\in T_pL.$ Notice that, for the case of a Randers space 
	with   Zermelo's data $(M,h,W)$, for every non-zero vectors $u$ and $y$ in $T_pM$, we will have
	$g_y(y,u)=0$ if and only if  (see Corollary 2.2.7 in \cite{raeisidehkordi2018finsler})
\begin{equation}
\label{orto}
 h\left(u,\frac y{F(y)}-W\right)=0.
 \end{equation}
The following Lemma, which proof follows straightforwardly  from the previous definitions (see also \cite{Shen}), will be useful
in the next section. 	

\begin{lemma}
	Let $(M,F)$ be a Finsler space, $\mathcal{U}$ an open subset of  $M$, and $f$ a smooth function on $\mathcal{U}$ with $df\neq 0$. Then, $ n=\left. \frac{\nabla f}{F(\nabla f)}\right|_{f^{-1}(c)}$ is orthogonal to $f^{-1}(c)$ with respect to $g_n$. \label{ortho}
\end{lemma}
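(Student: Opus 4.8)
The plan is to unwind the definitions of the Finslerian gradient and of $g$-orthogonality, and to observe that the only non-trivial ingredient is the $0$-homogeneity of the fundamental tensor. First I would fix a point $p\in f^{-1}(c)$ and note that, since $df_p\neq 0$, the value $c$ is regular, so near $p$ the level set $f^{-1}(c)$ is a hypersurface with tangent space $T_p f^{-1}(c)=\ker df_p$. Moreover, $df_p\neq 0$ together with the defining relation (\ref{nabla}) forces $\nabla f_p\neq 0$, hence $F(\nabla f_p)>0$, and therefore $n=\nabla f_p/F(\nabla f_p)$ is a well-defined non-zero vector of $T_pM$.

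Next I would take an arbitrary $v\in T_p f^{-1}(c)$, so that $df_p(v)=0$. Applying the definition (\ref{nabla}) of the gradient gives $g_{\nabla f_p}(\nabla f_p,v)=df_p(v)=0$. Now I invoke the fact that $y\mapsto g_y$ is positively homogeneous of degree $0$, i.e. $g_{\lambda y}=g_y$ for all $\lambda>0$, which follows at once from the degree-$1$ homogeneity of $F$ in the definition (\ref{fun.form}). Consequently $g_n=g_{\nabla f_p}$, and by bilinearity of $g_n$ we obtain
\begin{equation*}
g_n(n,v)=\frac{1}{F(\nabla f_p)}\,g_{\nabla f_p}(\nabla f_p,v)=0 .
\end{equation*}
Since this holds for every $v\in T_p f^{-1}(c)$ and $n\neq 0$, this is precisely the statement that $n$ is orthogonal to $f^{-1}(c)$ at $p$ with respect to $g_n$, in the sense recalled above. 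As $p\in f^{-1}(c)$ was arbitrary, the conclusion holds along the whole level set.

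I do not anticipate a serious obstacle here; the single point that must not be glossed over is the $0$-homogeneity of $y\mapsto g_y$, which is exactly what makes the normalization in the definition of $n$ harmless when passing from $g_{\nabla f_p}$ to $g_n$. As an alternative route in the Randers case one could instead combine the orthogonality criterion (\ref{orto}) with Lemma \ref{nablas}, but the general Finslerian argument sketched above is shorter and applies verbatim to any Finsler space.
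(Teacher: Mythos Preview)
Your argument is correct and is exactly the ``straightforward from the definitions'' proof the paper alludes to without spelling out: the paper gives no explicit proof of this lemma, merely remarking that it follows from the previous definitions (with a pointer to Shen), and your write-up is precisely that unwinding. The one substantive point---the $0$-homogeneity of $y\mapsto g_y$ needed to pass from $g_{\nabla f_p}$ to $g_n$---is the right thing to highlight, and your closing remark about the alternative Randers-specific route via (\ref{orto}) and Lemma~\ref{nablas} is a nice bonus that the paper does not mention.
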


We can now introduce the notion of transnormality in Finsler spaces, a   concept that
has begun to attract some considerable attention  in geometry,
see \cite{transnormal}, for instance.
Let $f:M\to \RR $ be a smooth function. If there exists a continuous function $\mathfrak{b}:f(M) \longrightarrow \RR$ such that 
\begin{equation}
F^2(\nabla f)=\mathfrak{b}\circ f, 
\end{equation} 
with $\nabla f$ given by (\ref{nabla}),
 then $f$ is called a  {Finsler transnormal} (shortly $F$-transnormal) function. When transnormal functions are available, 
some properties of the geodesic flow in a Finsler space can be easily determined. Geodesics
in Finsler geometry are defined in the same way of Riemannian spaces. First, notice that 
the  {length} of a piecewise smooth curve $\gamma: [a,b] \longrightarrow M$ with respect to $F$ is defined as
\begin{equation}
\label{length}
\mathit{L}(\gamma)=\int_{a}^{b}F(\gamma(t),\gamma'(t))dt.
\end{equation}
Analogously to the Riemannian case, the  {distance} from a point $p\in M$ to another point $q\in M$ in the Finsler space $(M,F)$ is given by
\begin{equation}
\label{d_F}
d_F(p,q)=\inf_\gamma\int_{a}^{b}F(\gamma(t),\gamma'(t))dt,
\end{equation}
where the infimum is meant     to be taken over all piecewise smooth curves $\gamma:[a,b] \longrightarrow M$ joining $p$ to $q$. For a Finsler space $(M,F)$, the  {geodesics} of $F$ are the length (\ref{length}) minimizing curves. Notice that, when we are dealing with  Randers spaces   derived from the null geodesic of a Lorentzian manifold, as it was discussed in Section 1, the geodesics of  $(M,F)$ correspond to a realization of Fermat's principle of least time for the original null geodesics.  For a more general mathematical discussion
on the Fermat's principle in Finsler geometry, see \cite{M2}.
It is worth mentioning that, for some special vectors $W$, there is a useful relation between   geodesics in a  Randers space  
with Zermelo data 
$(M,h,W)$ and the usual geodesics in the Riemannian space $(M,h)$. Such relation is expressed 
by the following Lemma, which follows directly as a Corollary of Theorem 2 in \cite{robles2007geodesics}.
	\begin{lemma}\label{geo}
	 Let $(M,h)$  be a Riemannian manifold endowed with a Killing vector field $W$. 
		Given a unitary geodesic $\gamma_h:(-\epsilon,\epsilon)\to M$ of $(M,h)$, the curve ${\gamma}_F(t)=\varphi_W(t,\gamma_h(t))$, where $\varphi_W:(-\epsilon,\epsilon)\times U \to M$ is the flow of $W$, will be a 
		 F-unitary geodesic  of the Randers space $(M,F)$ with Zermelo data $(M,h,W)$.
\end{lemma}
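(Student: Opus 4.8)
\emph{Proof proposal.} I would treat this as the Zermelo-navigation characterization of Randers geodesics, of which it is essentially Theorem~2 of \cite{robles2007geodesics}, and organize the argument around two claims: that $\gamma_F$ has $F$-unit speed, and that it locally minimizes $F$-length; at each step I would keep visible where the Killing hypothesis on $W$ is used. (An alternative derivation working directly from the Randers geodesic equations written in Zermelo form is the one carried out in \cite{robles2007geodesics}, but the variational route below seems more transparent.)

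First I would establish the unit-speed property. Writing $\varphi_W^t:=\varphi_W(t,\cdot)$, which is an $h$-isometry for every $t$ precisely because $W$ is Killing, I would differentiate $\gamma_F(t)=\varphi_W^t(\gamma_h(t))$ and use $\partial_t\varphi_W^t(x)=W(\varphi_W^t(x))$ to obtain
\begin{equation}
\dot\gamma_F(t)-W(\gamma_F(t))=(d\varphi_W^t)_{\gamma_h(t)}\,\dot\gamma_h(t),
\end{equation}
whence $h(\dot\gamma_F-W,\dot\gamma_F-W)=h(\dot\gamma_h,\dot\gamma_h)=1$. Setting $F(y)=1$ in (\ref{Randers1}) and squaring reduces to $h(y-W,y-W)=1$, i.e. the indicatrix of $F$ is the $h$-unit sphere translated by $W$; combined with the previous identity this gives $F(\dot\gamma_F)\equiv 1$, so $\gamma_F$ is $F$-unitary.

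Next I would prove local minimization. By construction of the Zermelo metric, $d_F(p,q)$ is the least time in which a craft whose velocity relative to the medium has $h$-norm one can be steered from $p$ to $q$ under the wind $W$; an admissible trajectory is therefore an $F$-unit-speed curve, and its travel time equals its $F$-length. Given such a $\sigma$, I would pass to the ``unwound'' curve $\tilde\sigma(t):=\varphi_W^{-t}(\sigma(t))$ and show, by a computation like the one above together with the invariance of $W$ under its own flow, that $\dot{\tilde\sigma}(t)=(d\varphi_W^{-t})_{\sigma(t)}\big(\dot\sigma(t)-W(\sigma(t))\big)$, hence $h(\dot{\tilde\sigma},\dot{\tilde\sigma})=1$: unwinding is a bijection between $F$-unit-speed curves and $h$-unit-speed curves over the same parameter interval, carrying $\gamma_F$ to $\gamma_h$. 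Restricting to a subinterval $[t_0,t_1]$ on which $\gamma_h$ is $h$-minimizing, I would conclude that $t_1-t_0$ is the minimal travel time between $\gamma_F(t_0)$ and $\gamma_F(t_1)$, realized by $\gamma_F$, so that $\gamma_F$ is an $F$-geodesic.

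The hard part is exactly this last conclusion: because the unwinding map depends on $t$, the fixed-endpoint variational problem for $F$ becomes, after unwinding, one with a \emph{moving} endpoint $\varphi_W^{t_1-s}(\gamma_h(t_1))$, so I cannot simply invoke that $\gamma_h$ minimizes $h$-length between its endpoints; I must rule out a strictly shorter travel time $s-t_0<t_1-t_0$. I would handle this with the first-variation formula for the $h$-distance, using that the moving endpoint has velocity $-W$ and that $|h(W,\dot\gamma_h)|<1$ (Cauchy--Schwarz, since $h(W,W)<1$ and $\gamma_h$ is $h$-unit): this makes the relevant function of $s$ strictly monotone at $s=t_1$, which forbids a shorter competitor near $\gamma_F$. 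Throughout, the Killing hypothesis enters only through the fact that each $\varphi_W^t$ is an $h$-isometry — which is what preserves the admissibility constraint under unwinding and thereby intertwines the two geodesic flows.
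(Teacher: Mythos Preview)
The paper does not actually prove this lemma: it simply records that the statement ``follows directly as a Corollary of Theorem~2 in \cite{robles2007geodesics}'', where Robles verifies the Randers geodesic equations for $\gamma_F$ by direct computation. Your proposal is therefore not a reconstruction of the paper's argument but a genuinely different, variational, route. The unit-speed step is clean and correct; the unwinding bijection $\sigma\mapsto\tilde\sigma(t)=\varphi_W^{-t}(\sigma(t))$ between $F$-unit-speed and $h$-unit-speed curves is exactly right, and isolates precisely where the Killing hypothesis enters. What your approach buys over Robles' is conceptual transparency (no Christoffel symbols, no spray coefficients); what Robles' buys is that it sidesteps the moving-endpoint issue entirely.

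That moving-endpoint issue is where your write-up has a small but real gap. You invoke the first-variation formula and Cauchy--Schwarz to get that the relevant function of $s$ is ``strictly monotone at $s=t_1$'', and conclude this ``forbids a shorter competitor near $\gamma_F$''. But local monotonicity at $s=t_1$ only rules out competitors with travel time slightly below $t_1-t_0$; to conclude $\gamma_F$ is a geodesic you must rule out \emph{all} shorter travel times on a short enough segment. The fix is already in your hands: the same bound $|h(W,\dot\gamma)|\le\|W\|<1$ that you use at $s=t_1$ in fact gives a global Lipschitz estimate
\[
d_h\big(\gamma_h(t_0),\varphi_W^{\,t_1-s}(\gamma_h(t_1))\big)\ \ge\ (t_1-t_0)-\|W\|_\infty\,(t_1-s)\ >\ s-t_0
\]
for every $s<t_1$ (on a subinterval where $\gamma_h$ is $h$-minimizing and the distance function is smooth). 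This shows no $h$-unit-speed curve of length $s-t_0$ can join $\gamma_h(t_0)$ to the moved endpoint, hence no $F$-unit-speed competitor of length $s-t_0$ exists, and $\gamma_F$ is genuinely $F$-minimizing on short segments. With this sharpening your argument is complete.
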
 

The  {distance} from a given compact subset $ A $ of a manifold $ M $ to any point $ p\in M $ is defined as $\rho:M\to\RR$ with $\rho(p)=d_F(A,p)$. If for every $p,q \in M$ there exists a shortest unit speed curve from $p$ to $q$, then $F(\nabla \rho)=1$, indicating that $\rho$ is $F$-transnormal with $\mathfrak{b}= 1$ \cite{Shen}. The next results, which proofs can be found in \cite{alexandrino2018finsler}, will be useful to 
characterize the relation between the propagation of wavefronts  and   transnormal function in Finsler spaces.

\begin{proposition}[]\label{parallel}
	Let $f:M\to \RR$ be a $F$-transnormal function with $f(M)=[a,b]$. If  $c<d \in f(M)$, then for every $q\in f^{-1}(d)$,  $$d_F(f^{-1}(c),q)=d_F(f^{-1}(c),f^{-1}(d))=\int_{c}^{d}\frac{ds}{\sqrt{\mathfrak{b}(s)}}= L(\alpha),$$ where $\alpha$ is a  reparametrization of (an extension of) the integral curve of $\nabla f$.
\end{proposition}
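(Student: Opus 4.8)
The plan is to sandwich $d_F(f^{-1}(c),q)$ between two equal quantities: a lower bound valid for every admissible curve, and the length of a concrete curve — the integral curve of $\nabla f$ — that attains it.

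\textbf{Lower bound.} Let $\gamma:[0,1]\to M$ be any piecewise smooth curve with $\gamma(0)\in f^{-1}(c)$ and $\gamma(1)=q$. Using the definition (\ref{nabla}) of the Finsler gradient together with the fundamental (Cauchy--Schwarz type) inequality $g_{\nabla f}(\nabla f,v)\le F(\nabla f)F(v)$ — with equality precisely when $v$ is a non-negative multiple of $\nabla f$ — one gets, wherever $f\circ\gamma$ is differentiable,
$$\frac{d}{dt}f(\gamma(t))=df_{\gamma(t)}(\gamma'(t))\le F(\nabla f_{\gamma(t)})\,F(\gamma'(t))=\sqrt{\mathfrak{b}(f(\gamma(t)))}\;F(\gamma'(t)).$$
Since $df\neq0$ makes $F(\nabla f)>0$, hence $\mathfrak{b}>0$, dividing by $\sqrt{\mathfrak{b}(f(\gamma(t)))}$ and integrating over $[0,1]$, the left-hand side reduces by the fundamental theorem of calculus — no monotonicity of $f\circ\gamma$ is needed — to $\int_c^d ds/\sqrt{\mathfrak{b}(s)}$, while the right-hand side is $\le L(\gamma)$. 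Taking the infimum over $\gamma$ gives $d_F(f^{-1}(c),q)\ge\int_c^d ds/\sqrt{\mathfrak{b}(s)}$, an estimate that does not depend on the particular $q\in f^{-1}(d)$.

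\textbf{Upper bound.} Let $\sigma$ be the integral curve of $\nabla f$ through $q$. Along $\sigma$ one has $\sigma'=\nabla f_\sigma$, so $\frac{d}{dt}f(\sigma(t))=g_{\nabla f}(\nabla f,\nabla f)=F^2(\nabla f_{\sigma(t)})=\mathfrak{b}(f(\sigma(t)))>0$: thus $f\circ\sigma$ is strictly increasing and, after possibly extending $\sigma$ inside the region where $df\neq0$, it can be run backward from $q$ (where $f=d$) until it first meets $f^{-1}(c)$. Its $F$-length along that segment equals, after the substitution $s=f(\sigma(t))$, $ds=\mathfrak{b}(f(\sigma(t)))\,dt$,
$$\int F(\sigma'(t))\,dt=\int\sqrt{\mathfrak{b}(f(\sigma(t)))}\,dt=\int_c^d\frac{ds}{\sqrt{\mathfrak{b}(s)}}.$$
Denoting by $\alpha$ this segment (suitably reparametrized, e.g.\ by $F$-arclength), we obtain $L(\alpha)=\int_c^d ds/\sqrt{\mathfrak{b}(s)}$, which matches the lower bound; hence $\alpha$ is minimizing and $d_F(f^{-1}(c),q)=\int_c^d ds/\sqrt{\mathfrak{b}(s)}=L(\alpha)$. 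As the right-hand side is independent of $q\in f^{-1}(d)$, taking the infimum over such $q$ yields $d_F(f^{-1}(c),f^{-1}(d))$ as well.

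The length computations are routine once the fundamental inequality is in hand; the delicate point I expect is of a completeness nature — guaranteeing that the integral curve of $\nabla f$ through $q$ can really be extended backward all the way to $f^{-1}(c)$, since the level sets may be non-compact and the flow of $\nabla f$ need not be complete. Here one argues that on the compact $f$-interval $[c,d]$ the quantity $F^2(\nabla f)=\mathfrak{b}\circ f$ is bounded away from $0$, so the backward flow never stagnates and never leaves the region where $df\neq0$; combined with the standing assumption that shortest curves between points exist, this secures the connecting segment and closes the argument. A detailed proof along these lines can be found in \cite{alexandrino2018finsler}.
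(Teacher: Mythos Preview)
Your argument is correct and follows the standard route: the Finsler fundamental inequality $g_{\nabla f}(\nabla f,v)\le F(\nabla f)F(v)$ gives the lower bound for every competitor, and the integral curve of $\nabla f$ realizes it. The paper itself does not supply a proof of this proposition; it simply states the result and refers the reader to \cite{alexandrino2018finsler} for the details, which is exactly the source you cite at the end of your sketch. So your proposal is consistent with --- and in fact more explicit than --- what the paper provides.

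One small caveat worth tightening: in the lower-bound step you divide by $\sqrt{\mathfrak{b}(f(\gamma(t)))}$ along an arbitrary curve $\gamma$, and such a curve may wander through values of $f$ outside $[c,d]$ (possibly near the critical values $a$ or $b$ where $\mathfrak{b}$ may vanish). The cleanest way around this is to set $G(s)=\int_c^s du/\sqrt{\mathfrak{b}(u)}$ and observe that $(G\circ f\circ\gamma)'\le F(\gamma')$ holds pointwise wherever $\mathfrak{b}>0$; one then restricts attention to the connected component of $\{\mathfrak{b}>0\}$ containing $[c,d]$ (which, for transnormal functions on $[a,b]$, is all of $(a,b)$ under the usual hypothesis that the only singular values are the endpoints). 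This is precisely the kind of detail handled in \cite{alexandrino2018finsler}, so your closing reference is appropriate.
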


\noindent Notice that, from this proposition, we have   $f^{-1}(c)\subseteq \rho^{-1}(r)$ where $\rho(p)=d_F(f^{-1}(a),p)=r$. We say that two submanifolds $C$ and $D$ of a Finsler space are 
equidistant  if, for every $p\in C$ and $q\in D$, $d_F(p,D)=d_F(C,D)$ and $d_F(D,C)=d_F(q,C)$ (or, equivalently, $d_F(C,D)=d_F(C,q)$ and $d_F(D,C)=d_F(D,p)$).

 \begin{theorem}\label{improv}
	Let $M$ be a compact manifold and $f:M\to \mathbb{R}$ be a $F$-transnormal and analytic function such that $f(M)=[a,b]$. 
	Suppose that the level sets of $f$ are connected and $a$ and $b$ are the only critical values of $f$ in $[a,b]$. 
	Then,
	 for every $c,d \in [a,b]$, $f^{-1}(c)$ is equidistant to $f^{-1}(d).$ 
\end{theorem}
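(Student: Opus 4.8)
The plan is to bootstrap from Proposition \ref{parallel}, which already gives one half of the equidistance statement: for $c<d$, every $q\in f^{-1}(d)$ satisfies $d_F(f^{-1}(c),q)=d_F(f^{-1}(c),f^{-1}(d))=\int_c^d ds/\sqrt{\mathfrak{b}(s)}$, and in particular $f^{-1}(c)\subseteq\rho^{-1}(r)$ with $\rho(\cdot)=d_F(f^{-1}(c),\cdot)$. So the content that remains is the \emph{reverse} direction of the Finsler distance, $d_F(f^{-1}(d),f^{-1}(c))$ and $d_F(f^{-1}(d),p)$ for $p\in f^{-1}(c)$ — this is where non-reversibility of $F$ bites, and it is the main obstacle. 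The natural device is the reverse Finsler metric $\overleftarrow F(x,y):=F(x,-y)$, for which $d_{\overleftarrow F}(p,q)=d_F(q,p)$. In Zermelo terms, if $(M,h,W)$ is the data of $F$, then $\overleftarrow F$ has Zermelo data $(M,h,-W)$. I would first check that $f$ is still $F$-transnormal for $\overleftarrow F$: using Lemma \ref{nablas}(ii), $F(\nabla f)=\|\tilde\nabla f\|+df(W)$ while the reverse gradient satisfies $\overleftarrow F(\overleftarrow\nabla f)=\|\tilde\nabla f\|-df(W)$, so $F^2(\nabla f)\overleftarrow F^2(\overleftarrow\nabla f)=(\|\tilde\nabla f\|^2-df(W)^2)^2$. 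This is a function of $f$ alone only if $\|\tilde\nabla f\|^2$ and $df(W)^2$ are each functions of $f$; the first follows from $F$-transnormality of $f$ together with the Zermelo identity, and for the second I would invoke analyticity of $f$ and $M$ compact (the hypotheses we are handed but have not yet used) to argue that $df(W)$, being constant on a dense open set of regular level sets by a foliated-flow argument, is globally a function of $f$. Hence $f$ is $\overleftarrow F$-transnormal with its own $\overleftarrow{\mathfrak b}\circ f$.

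Granting that, apply Proposition \ref{parallel} to $(M,\overleftarrow F)$: for $c<d$ and any $p\in f^{-1}(c)$, $d_{\overleftarrow F}(f^{-1}(d),p)=d_{\overleftarrow F}(f^{-1}(d),f^{-1}(c))=\int_c^d ds/\sqrt{\overleftarrow{\mathfrak b}(s)}$, which translates to $d_F(p,f^{-1}(d))=d_F(f^{-1}(c),f^{-1}(d))$ being independent of the chosen $p$. Combined with the forward statement from Proposition \ref{parallel} this yields exactly $d_F(C,D)=d_F(C,q)$ for all $q\in D$ and $d_F(D,C)=d_F(D,p)$ for all $p\in C$ with $C=f^{-1}(c)$, $D=f^{-1}(d)$ — i.e. $C$ is equidistant to $D$ in the sense defined just before the theorem. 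The case $c=d$ is trivial, and $c>d$ is symmetric.

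I expect the delicate point to be precisely the claim that $df(W)$ is itself a function of $f$, since without it the reverse metric need not be transnormal along $f$ and Proposition \ref{parallel} does not apply. The argument I would give: the regular level sets $f^{-1}(t)$, $t\in(a,b)$, foliate $M\setminus(f^{-1}(a)\cup f^{-1}(b))$; by Lemma \ref{ortho} the $h$-unit normal to each leaf is $\tilde\nabla f/\|\tilde\nabla f\|$, and Proposition \ref{parallel} (applied as is, for $F$) forces the $F$-normal integral curves of $\nabla f$ to be $F$-geodesics meeting every leaf orthogonally with leaf-independent transit time; pulling this back through Lemma \ref{nablas}(i), which expresses $\tilde\nabla f$ in terms of $\nabla f$, $F(\nabla f)$ and $W$, shows $g_y(y,\cdot)$-orthogonality to the leaf is equivalent (via \eqref{orto}) to $h(\cdot,\,y/F(y)-W)=0$; since both $y/F(y)$ (the $F$-unit normal) and the $h$-unit normal are determined leafwise, $h(W,\cdot)$ restricted to each leaf is determined, and its normal component $df(W)/\|\tilde\nabla f\|$ times $\|\tilde\nabla f\|=df(W)$ is therefore constant along each leaf. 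Finally, analyticity of $f$ upgrades "constant on each regular leaf" to "globally expressible as a continuous — indeed analytic — function of $f$ on all of $[a,b]$", closing the gap; if a cleaner route exists it is to observe directly that $\overleftarrow F$ and $F$ share the same unoriented geodesics (geodesics of one are reverse-parametrized geodesics of the other), so the transnormality of $f$ for $\overleftarrow F$ is forced by that for $F$, bypassing the $df(W)$ computation altogether.
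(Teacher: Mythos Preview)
The paper does not itself prove Theorem \ref{improv}: it is stated with the remark that the proof ``can be found in \cite{alexandrino2018finsler}'', so there is no in-paper argument against which to compare your proposal. I can therefore only comment on the proposal's internal soundness.

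There are two genuine gaps. First, your main argument is Randers-specific: you pass to Zermelo data $(M,h,W)$, invoke Lemma \ref{nablas} to compute $\overleftarrow F(\overleftarrow\nabla f)=\|\tilde\nabla f\|-df(W)$, and then spend the bulk of the effort arguing that $df(W)$ is a function of $f$. But Theorem \ref{improv} is stated for a general Finsler structure $F$, with no Randers hypothesis; even granting every step, you would only have a special case. Your ``cleaner route'' at the end --- that shared unoriented geodesics force $\overleftarrow F$-transnormality of $f$ --- is asserted rather than proved: transnormality is a pointwise condition on $F(\nabla f)$ along each level set, and knowing that $\overleftarrow F$-geodesics are reversed $F$-geodesics does not by itself pin down $\overleftarrow F(\overleftarrow\nabla f)$ as constant on level sets.

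Second, even within Randers, your application of Proposition \ref{parallel} to $\overleftarrow F$ is in the wrong direction. Proposition \ref{parallel} requires the source level set to carry the \emph{smaller} value: for $c<d$ it controls $d_{\overleftarrow F}(f^{-1}(c),q)$ for $q\in f^{-1}(d)$, which back-translates to $d_F(q,f^{-1}(c))$ being constant over $q\in f^{-1}(d)$. You instead invoke $d_{\overleftarrow F}(f^{-1}(d),p)$ for $p\in f^{-1}(c)$, i.e.\ distance from the \emph{higher} level set downwards, which Proposition \ref{parallel} does not give. Obtaining that direction would require, e.g., that $-f$ is also transnormal --- a separate claim, since in non-reversible Finsler geometry $\nabla(-f)\neq -\nabla f$. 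The hypotheses you have barely touched (compactness, analyticity, connected level sets, $a$ and $b$ the only critical values) are precisely what the cited proof presumably uses to close these reversals; in your sketch they appear only in the Randers-specific $df(W)$ step.
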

\noindent Finally, the {\em cut loci} of the point $p$ associated to the distance function $\rho$ is defined analogously 
to the Riemannian case: it consists in the set of all points $q\in M$ with
two or more different  length (\ref{length}) minimizing 
curves  $\gamma:[a,b] \to M$ joining $p$ to $q$.

\section{The Huygens' envelope principle in Finsler spaces}
\label{propagationsec}

Throughout this section, it is assumed that, on some part of a Finsler space $(M,F)$, a wavefront is spreading and sweeping the domain $U\subset M$    in the interval of time from $t=0$ to $t=r$. 
It is also assumed that $ U $ is  a smooth manifold. Given a wavefront $\phi_p(t)$, we call the wave
ray at $q\in \phi_p(t)$ the shortest time path connecting $p$ to $q$. Again, due to the intricacies 
of Finslerian metrics, one cannot take for granted many properties of wave rays in Euclidean spaces as, for instance, the fact that they are orthogonal to the wavefronts. Let us start by considering 
 the Huygens' theorem for more general situations. 
The following theorem generalizes Markvorsen's result \cite{markvorsen2016finsler} for any Finsler space.

\begin{theorem}\label{propag.fire} 
	Let $\rho:M\to \RR$ with $\rho(p)=d_F(A,p)$, where $A$ is a compact subset of $M$ and $\rho(U)=[s,r]$, where $0<s<r$. Suppose that $\rho^{-1}(s)$ is the wavefront at time $t=0$ and that there are no cut loci in $\rho^{-1}([s,r])$. Then,
		for each $t\in [s,r]$, $\rho^{-1}(t)$ is the wavefront at time $t-s$ and 
	the Huygens' envelope principle is satisfied by all the wavefronts 
	$\displaystyle \{\rho^{-1}(t)\}_{t\in [s,r]}.$
	 Furthermore, the wave rays are geodesics of $F$ and they are also   orthogonal to each wavefront $\rho^{-1}(t)$ at time $t-s$.
\end{theorem}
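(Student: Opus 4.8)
The plan is to reduce everything to the fact that, under the hypothesis that $\rho^{-1}([s,r])$ contains no cut loci, $\rho$ is smooth there and, being the forward distance from the compact set $A$ realized by unit-speed length-minimizing geodesics, satisfies $F(\nabla\rho)\equiv 1$; thus $\rho$ is $F$-transnormal with $\mathfrak b\equiv1$ on that region, so Proposition~\ref{parallel} applies to it. I also record once and for all the elementary \emph{crossing lemma}: if $\gamma:[0,\ell]\to M$ is a unit-speed minimizing geodesic from $A$ to a point $y$, then $\rho(\gamma(\sigma))=\sigma$ for all $\sigma$, so $\gamma$ meets each level set $\rho^{-1}(c)$, $0<c<\rho(y)$, only at $\gamma(c)$, both sub-arcs of $\gamma$ are again minimizing, and $d_F(\gamma(c),y)=\rho(y)-c$; the proof is the usual concatenation argument. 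With this in hand, Step~1 is the identification of the wavefronts. The wavefront at time $\tau\ge0$ is the forward sphere $\mathcal W(\tau)=\{q\in U:\ d_F(\rho^{-1}(s),q)=\tau\}$ about the initial wavefront $\rho^{-1}(s)$, and I claim $\mathcal W(t-s)=\rho^{-1}(t)$. The inclusion $\rho^{-1}(t)\subseteq\mathcal W(t-s)$ is Proposition~\ref{parallel} with $\mathfrak b\equiv1$, $c=s$, $d=t$, together with $d_F(\rho^{-1}(s),q)\ge t-s$, which follows from $\rho(q)\le \rho(p)+d_F(p,q)=s+d_F(p,q)$ for $p\in\rho^{-1}(s)$. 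For the reverse inclusion, take $q\in U$ with $d_F(\rho^{-1}(s),q)=t-s$: the same triangle inequality gives $\rho(q)\le t$, while a minimizing geodesic from $A$ to $q$ crosses $\rho^{-1}(s)$ at a point $p$ with $d_F(p,q)=\rho(q)-s\ge d_F(\rho^{-1}(s),q)=t-s$, so $\rho(q)\ge t$; hence $\rho(q)=t$.

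Step~2 handles the wave rays. By definition the wave ray reaching $q\in\rho^{-1}(t)$ is a shortest-time path from the initial data to $q$, i.e.\ the relevant sub-arc of a minimizing geodesic from $A$ to $q$; since the geodesics of $F$ are precisely the length-minimizing curves, wave rays are $F$-geodesics. Writing the ray as $\gamma$ parametrized by arc length with $\gamma(\ell)=q$, the crossing lemma gives $d\rho_q(\gamma'(\ell))=1$, whereas by definition $d\rho_q(v)=g_{\nabla\rho_q}(\nabla\rho_q,v)$ and $F(\nabla\rho_q)=F(\gamma'(\ell))=1$. The fundamental (Cauchy--Schwarz) inequality $g_y(y,v)\le F(y)F(v)$ of Finsler geometry, with equality only when $v$ is a nonnegative multiple of $y$, then forces $\gamma'(\ell)=\nabla\rho_q$. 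By Lemma~\ref{ortho}, $\nabla\rho_q/F(\nabla\rho_q)=\nabla\rho_q$ is $g_{\nabla\rho_q}$-orthogonal to $\rho^{-1}(t)$ at $q$, so the wave ray meets the wavefront orthogonally at each of its points.

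Step~3 is the Huygens' envelope principle. Fix $t,\,t+\tau\in[s,r]$ and, for $q\in\rho^{-1}(t)$, let $S^+_\tau(q)=\{y:\ d_F(q,y)=\tau\}$ be the secondary wavefront emitted from $q$ after time $\tau$. First, every $y\in\rho^{-1}(t+\tau)$ lies on some $S^+_\tau(q)$: a minimizing geodesic from $A$ to $y$ crosses $\rho^{-1}(t)$ at a point $q$ for which, by the crossing lemma, $d_F(q,y)=\tau$. Second, the whole family sits in the sublevel set $\{\rho\le t+\tau\}$, since $\rho(y)=d_F(A,y)\le d_F(A,q)+d_F(q,y)=t+\tau$ for $y\in S^+_\tau(q)$. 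Hence $\rho^{-1}(t+\tau)$ is the outer boundary (``leading edge'') of $\bigcup_{q\in\rho^{-1}(t)}S^+_\tau(q)$, and each $S^+_\tau(q)$ touches it from inside; at a contact point $y\in\rho^{-1}(t+\tau)\cap S^+_\tau(q)$ the restriction $\rho|_{S^+_\tau(q)}$ attains its maximum $t+\tau$, so $T_yS^+_\tau(q)\subseteq\ker d\rho_y=T_y\rho^{-1}(t+\tau)$, and a dimension count makes these hyperplanes coincide. Thus $\rho^{-1}(t+\tau)$ is tangent to every secondary wavefront at the point where they meet, i.e.\ it is their envelope, which is exactly the Huygens' envelope principle for $\{\rho^{-1}(t)\}_{t\in[s,r]}$.

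The main obstacle I anticipate is not any single computation but the global regularity that makes the envelope picture literally correct: one must know that each secondary wavefront $S^+_\tau(q)$ is a genuine smooth hypersurface near the contact points and that the minimizing geodesic from $q$ to such a point is unique — otherwise the tangent-hyperplane identification in Step~3, and the equality $\gamma'(\ell)=\nabla\rho_q$ in Step~2, may fail — and one must keep all distances \emph{forward}, since $d_F$ is not symmetric. This is precisely what the hypothesis of no cut loci in $\rho^{-1}([s,r])$ provides: a second minimizer from $q$ to a contact point $y$ could be prepended by the $A$-to-$q$ minimizer to produce a second minimizer from $A$ to $y$, contradicting the absence of cut loci; hence all minimizers in play are unique and the distance functions being differentiated are smooth where we differentiate them.
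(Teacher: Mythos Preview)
Your proof is correct and follows the same backbone as the paper's: both recognize that $\rho$ is $F$-transnormal with $\mathfrak b\equiv1$ and then invoke Proposition~\ref{parallel} to identify the level sets $\rho^{-1}(t)$ with the wavefronts. The differences are in the two remaining steps. For orthogonality, the paper appeals directly to the clause in Proposition~\ref{parallel} stating that the minimizing segment is (a reparametrization of) the integral curve of $\nabla\rho$, and then applies Lemma~\ref{ortho}; you instead recover $\gamma'(\ell)=\nabla\rho_q$ via the Finslerian Cauchy--Schwarz inequality, which is a nice self-contained alternative. For the Huygens envelope, the paper argues purely metrically: it posits the envelope $e(\delta)$, shows by a contradiction on distances that every $p\in e(\delta)$ satisfies $d_F(\rho^{-1}(t_0),p)=\delta$, and then proves the two inclusions $e(\delta)\subset\rho^{-1}(t_0+\delta)$ and $\rho^{-1}(t_0+\delta)\subset e(\delta)$ by comparing travel times. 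You take the differential route, showing that each secondary sphere $S^+_\tau(q)$ lies in $\{\rho\le t+\tau\}$ and touches $\rho^{-1}(t+\tau)$ with coinciding tangent hyperplanes. Your approach makes the word ``envelope'' literal (tangency) and your closing paragraph handling the smoothness of $S^+_\tau(q)$ via the no-cut-loci hypothesis is a point the paper leaves implicit; the paper's approach, in turn, avoids any appeal to the regularity of the secondary spheres.
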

\begin{proof}
	Since $\rho$ is a transnormal function with $\mathfrak{b}=1$, from Proposition \ref{parallel} 
we have that, for every $t>s$ and $q\in \rho^{-1}(t)$,   
	 \begin{equation}\label{firefont}
	d_F\left(\rho^{-1}(s),q\right)=t-s,
	\end{equation} 
meaning that the wavefront  reaches  $\rho^{-1}(t)$ after time $t-s$. The relation $d_F(\rho^{-1}(s),q)= d_F(\rho^{-1}(s),\rho^{-1}(t))$ implies that no part of the wavefront meets $\rho^{-1}(t)$ before time $t-s$,  and thus $\rho^{-1}(t)$ is indeed the wavefront at this time. 
	
	Now, in order to
	 verify the Huygens' envelope principle, let us assume that $e(\delta)$ is the envelope of radius $\delta$ of the wavefront $\rho^{-1}(t_0)$ for some time $t_0\geq s$. It implies that for every $p\in e(\delta)$, 
	\begin{equation}\label{env}
d_F(\rho^{-1}(t_0),p)=d_F(\rho^{-1}(t_0),e(\delta))=\delta,
	\end{equation}
which follows from contradiction, since if  	
 there would exist some point $p_0\in e(\delta)$ and $q_0\in \rho^{-1}(t_0)$ such that $d_F(\rho^{-1}(t_0),p_0)=d_F(q_0,p_0)=r<\delta$, then the  wavefront centered at $q_0$ and radius $\delta$
 would  intersect the envelope, which is a contradiction and consequently relation \ref{env} is indeed valid.
  So, as $p\in e(\delta)$, there exists a path from a unique point $q\in \rho^{-1}(t_0)$ to the point $p$ along which the wavefront time of travel   is precisely $\delta$. Since $\rho^{-1}(t_0)$ is the wavefront,
 this wave ray  has emanated from some point in $\rho^{-1}(s)$ and reached   point $q$ at time $t_0-s$. 
	Therefore, we have
\begin{equation}
	 d_F(\rho^{-1}(s),p)\leq t_0-s+\delta. 
\end{equation}	
Notice that, if $d_F(\rho^{-1}(s),p)<t_0-s+\delta$, there would exist a path from $\rho^{-1}(s)$ to $p$ through which wave ray travels in a time shorter than $t_0-s+\delta$. As 
\begin{equation}
 d_F(\rho^{-1}(s),\rho^{-1}(t_0))=t_0-s, 
\end{equation}
 this ray meets $\rho^{-1}(t_0)$ at exactly time $t_0-s$. As a result, the inequality would hold only when this ray travels from $\rho^{-1}(t_0)$ to $p$ at a time less than $\delta$ which is a contradiction by Eq. (\ref{env}). Finally, we have 
\begin{equation}
d_F(\rho^{-1}(s),p)=t_0-s+\delta =t-s
\end{equation} 
  which means $p$ belong to the wavefront $\rho^{-1}(t)$, and hence $e(\delta)\subset\rho^{-1}(t)$. 
	Now, we can establish that $\rho^{-1}(t)\subset e(\delta)$. Assume that $p\in\rho^{-1}(t)$. 
	Since $\rho^{-1}(t_0)$ is the wavefront, each wave ray from $\rho^{-1}(s)$ reaches   $\rho^{-1}(t_0)$ and $\rho^{-1}(t)$,   at times $t_0-s$ and $t-s$, respectively. 
	Using Proposition \ref{parallel}, one has
\begin{equation}	
	d_F(\rho^{-1}(t_0),p)=t-t_0=\delta,  
\end{equation}	
and consequently $p\in e(\delta)$. 

	To accomplish the proof, observe that each wave ray emanates from a point in $\rho^{-1}(s)$ and reaches   $\rho^{-1}(t)$ in the shortest time, implying that its traveled   path is a geodesic of the Finsler  space. Furthermore, assuming that $\alpha$ is the unit speed geodesic such that
\begin{equation}
 d_F(\rho^{-1}(s),\rho^{-1}(t))=d_F(\rho^{-1}(s),p)=L(\alpha|_{[0,t]})=t,
\end{equation}	
we have, according to Proposition \ref{parallel}, that $\alpha|_{[0,t]}$ is an extension of the integral curve of $\nabla\rho$. Hence, $\alpha|_{(0,t)}$ is the integral curve of $\nabla\rho$, and by Lemma \ref{ortho} it is orthogonal to each $\rho^{-1}(t)$. 
\end{proof}
\noindent Notice that the cut loci in $\rho^{-1}([s,r])$ are associated with singularities in the wavefronts, an extremely interesting topic \cite{sing}, but which is out of the scope of the present paper. 

If a transnormal function $f$ is available, one can determine the wavefronts without dealing with 
the Randers metric and/or the distance function. The following proposition, which proof follows in the same
way of Theorem \ref{propag.fire}, summarize this point. 

\begin{proposition}\label{propag.water}
	Suppose that   $f:M\to\RR$ is a $F$-transnormal function with $F^2(\nabla f)=\mathfrak{b}(f)$ and  $f(M)=[a,b]$. Assuming that  $f^{-1}(a)$ is a wavefront at time $t=0$, we have
	\begin{itemize}
		\item [$a)$] for every $c\in[a,b]$, $f^{-1}(c)$ is the wavefront at time 
\begin{equation}
 r_{a,c}=\int_{a}^{c}\frac{ds}{\sqrt{\mathfrak{b}(s)}} ,
\end{equation}		
		\item [$b)$] $\{f^{-1}(c)\}_{c\in [a,b]}$ satisfies Huygens' envelope principle,
		\item [$c)$] the wave rays are geodesics of $F$ joining  $f^{-1}(a)$ to $f^{-1}(b)$, and
		they are  also orthogonal to each wavefront.
	\end{itemize}
\end{proposition}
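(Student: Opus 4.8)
The plan is to deduce the statement directly from Theorem~\ref{propag.fire} by replacing the transnormal function $f$ with the distance function it induces. First I would set $\rho(p)=d_F(f^{-1}(a),p)$. By Proposition~\ref{parallel}, for any $c\in(a,b]$ and any $q\in f^{-1}(c)$ one has $d_F(f^{-1}(a),q)=d_F(f^{-1}(a),f^{-1}(c))=\int_a^c ds/\sqrt{\mathfrak b(s)}=r_{a,c}$, the minimizing curve being a reparametrization of an integral curve of $\nabla f$. Since $\mathfrak b>0$ on the interior of $[a,b]$, the map $c\mapsto r_{a,c}$ is a strictly increasing homeomorphism of $[a,b]$ onto $[0,r_{a,b}]$; as the level sets of $f$ partition $M$, this gives $\rho=r_{a,\cdot}\circ f$ and hence $\rho^{-1}(r_{a,c})=f^{-1}(c)$ for every $c$. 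Item~$(a)$ is then immediate: $f^{-1}(a)=\rho^{-1}(0)$ is the wavefront at $t=0$ by hypothesis, the identity $d_F(f^{-1}(a),q)=r_{a,c}$ says the disturbance reaches $f^{-1}(c)$ after time $r_{a,c}$, and the companion identity $d_F(f^{-1}(a),q)=d_F(f^{-1}(a),f^{-1}(c))$ guarantees that no part of the front arrives earlier, so $f^{-1}(c)$ is exactly the wavefront at that time.

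For item~$(b)$ I would repeat, almost verbatim, the envelope argument in the proof of Theorem~\ref{propag.fire}, with the single change that ``elapsed time'' is no longer numerically equal to $F$-distance but is obtained from it through the reparametrization above. Concretely, fix $c_0\in[a,b]$ and let $e(\delta)$ be the envelope of $F$-radius $\delta$ of the wavefront $f^{-1}(c_0)$ (which is the wavefront at time $r_{a,c_0}$ by item~$(a)$). As in Theorem~\ref{propag.fire}, a contradiction argument shows $d_F(f^{-1}(c_0),p)=d_F(f^{-1}(c_0),e(\delta))=\delta$ for every $p\in e(\delta)$, so there is a unique $q\in f^{-1}(c_0)$ joined to $p$ by a ray of travel time $\delta$; combining this with $d_F(f^{-1}(a),f^{-1}(c_0))=r_{a,c_0}$ and the fact that this ray, being part of a wave ray, meets $f^{-1}(c_0)$ exactly at time $r_{a,c_0}$, one obtains $d_F(f^{-1}(a),p)=r_{a,c_0}+\delta$. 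Writing $c$ for the value with $r_{a,c}=r_{a,c_0}+\delta$, i.e. $\delta=\int_{c_0}^{c}ds/\sqrt{\mathfrak b(s)}$, item~$(a)$ identifies $p$ as a point of the wavefront $f^{-1}(c)$, hence $e(\delta)\subseteq f^{-1}(c)$; conversely, any $p\in f^{-1}(c)$ satisfies $d_F(f^{-1}(c_0),p)=r_{a,c}-r_{a,c_0}=\delta$ by Proposition~\ref{parallel}, so $p\in e(\delta)$. Thus the envelope of $f^{-1}(c_0)$ at $F$-radius $\delta$ is precisely the wavefront $f^{-1}(c)$, which is the Huygens' envelope principle.

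Item~$(c)$ then comes for free: each wave ray joins $f^{-1}(a)$ to the level sets in the shortest possible time, hence realizes $d_F$ and is an $F$-geodesic, and since $f(M)=[a,b]$ it runs from $f^{-1}(a)$ all the way to $f^{-1}(b)$; by Proposition~\ref{parallel} it is a reparametrization of an integral curve of $\nabla f$, and by Lemma~\ref{ortho} the unit field $\nabla f/F(\nabla f)$ is orthogonal to every level set $f^{-1}(c)$, so each wave ray meets each wavefront orthogonally.

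I expect the only genuinely non-routine point to be the bookkeeping forced by the non-constant $\mathfrak b$: unlike in Theorem~\ref{propag.fire}, where $\mathfrak b\equiv 1$ makes time and $F$-distance coincide, here one must consistently pass between the $F$-radius $\delta$ of an envelope and the corresponding increment $\int_{c_0}^{c}ds/\sqrt{\mathfrak b(s)}$ of the parameter $c$, and one must check that the identification $\rho^{-1}(r_{a,c})=f^{-1}(c)$ holds on all of $[a,b]$ — equivalently, that the minimizer supplied by Proposition~\ref{parallel} is globally shortest, i.e.\ that no cut loci appear among the level sets. Everything else is a transcription of the argument already given for Theorem~\ref{propag.fire}.
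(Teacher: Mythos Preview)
Your proposal is correct and matches the paper's approach: the paper does not give an independent proof but simply states that the proof ``follows in the same way of Theorem~\ref{propag.fire}'', and what you have written is precisely that argument with the necessary reparametrization $c\mapsto r_{a,c}$ inserted to accommodate a non-constant $\mathfrak b$. Your flag about the implicit absence of cut loci is well placed, as the paper does not comment on it.
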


\section{Analogue Gravity Examples}

In this section, we will present two explicit examples,
in the context 
of  analogue gravity, 
 of wavefront propagation  
determined from the Huygens' envelope principle in Randers spaces,
whose validity for any space dimension was established by our  mathematical results.
  The examples, motivated by very recent
experimental results, are namely the cases of surface waves in flumes and vortices. Of course, we 
are assuming that for such realistic cases the  surface waves indeed obey a Klein-Gordon equation (\ref{KG}), 
for which the Huygens’ envelope principle is expected to hold on physical grounds.  
Nevertheless, in realistic experiments, typically, the wave propagation speed $c$ may depend on the
wave frequency,   a situation commonly dubbed in General Relativity as  
rainbow spacetimes, a situation which can be indeed also described from a Finslerian perspective \cite{Rainbow}.
Our present  approach and, in particular, our Huygens' envelope principle for wavefronts,  should be considered 
as  the first step towards the description of these more realistic configurations. The literature on the experiments \cite{A1,A2,A3,A4,A5,A6,A7,A8} discusses in details all these points.

 \subsection{Wavefronts in flumes}
 The first, and still more common, type of hydrodynamic analogue gravity model is 
 the case of surface waves in a long and shallow channel flow, a situation having effectively only
 one spatial dimension. Typically, the flow is stationary but its velocity $V$
 depends on the position due to the presence of certain obstacles in the channel
 bottom, see  \cite{A1,A2,A3,A4} for some concrete realizations of this kind
 of experiment. The surface waves propagation velocity $c$ also depends on the position
 along the channel. Horizons for the surface waves can be produced by selecting obstacles
 such that $c^2< V^2$ on some regions along the channel. 
 
 We will consider here the simplest case consisting of $(\mathbb{R},h)$, {\em i.e.} the real line 
  with the standard metric $h$,
 and the Zermelo  
vector field $W(x)$, where $x$ is  coordinate along the channel, with $W^2 < 1$.  
  Let $(\mathbb{R},F)$ be the associated Randers space, where the Randers metric $F$ is
  given by (\ref{Randers1}). Since the Randers space is one-dimensional in this case,
  the wavefronts    will correspond to a set of two points, and we do not need to worry
  about wave rays and their orthogonality to the wavefronts. For the sake of simplicity,
  suppose the waves are emitted at $t=0$ from a single point $q$.  The wavefront at $t=r$ will be given by
$\rho^{-1}(r)=\{p\in \mathbb{R} \ :\ d_F(q,p)=r\}$. Assuming that  
 $\gamma:[0,r]\to \mathbb{R}$ is the unit speed geodesic that realizes this distance,
we have  
\begin{equation}
\label{geod}
1=F(\gamma,\dot{\gamma})=\frac{|\dot{x}| -  W\dot x }{1-W^2 }  ,
\end{equation}
where (\ref{Randers1}) was used. From equation (\ref{geod}), we have that the right ($x_+$) and
 left-moving ($x_-$)  
 wavefronts are governed by the equations 
\begin{equation}
\label{wavefront}
\dot x_\pm = \pm 1 + W(x) ,
\end{equation}
and the wavefront at $t=r$ will be simply $\rho^{-1}(r)=\{x_-(r),x_+(r) \}$. Notice that both equations
 (\ref{wavefront}) are separable and could be solved straightforwardly  by quadrature, but for our purpose here a dynamical analysis for general $W$ typically
   suffices. Since $|W|<1$, there are no fixed points in  (\ref{wavefront}), meaning that
 $x_+$  and
    $x_-$ move continuously towards right and left, respectively.  Let us consider the explicit example of
 the Zermelo vector
 \begin{equation}
 \label{Zerm}
    W(x) = \frac{a}{1+x^2} ,
 \end{equation}
 with $0\le a<1$.
 Its aspect is quite simple (see Fig. \ref{Fig1}), 
\begin{figure}[t]
\hspace{3.5cm} \includegraphics[scale=0.75]{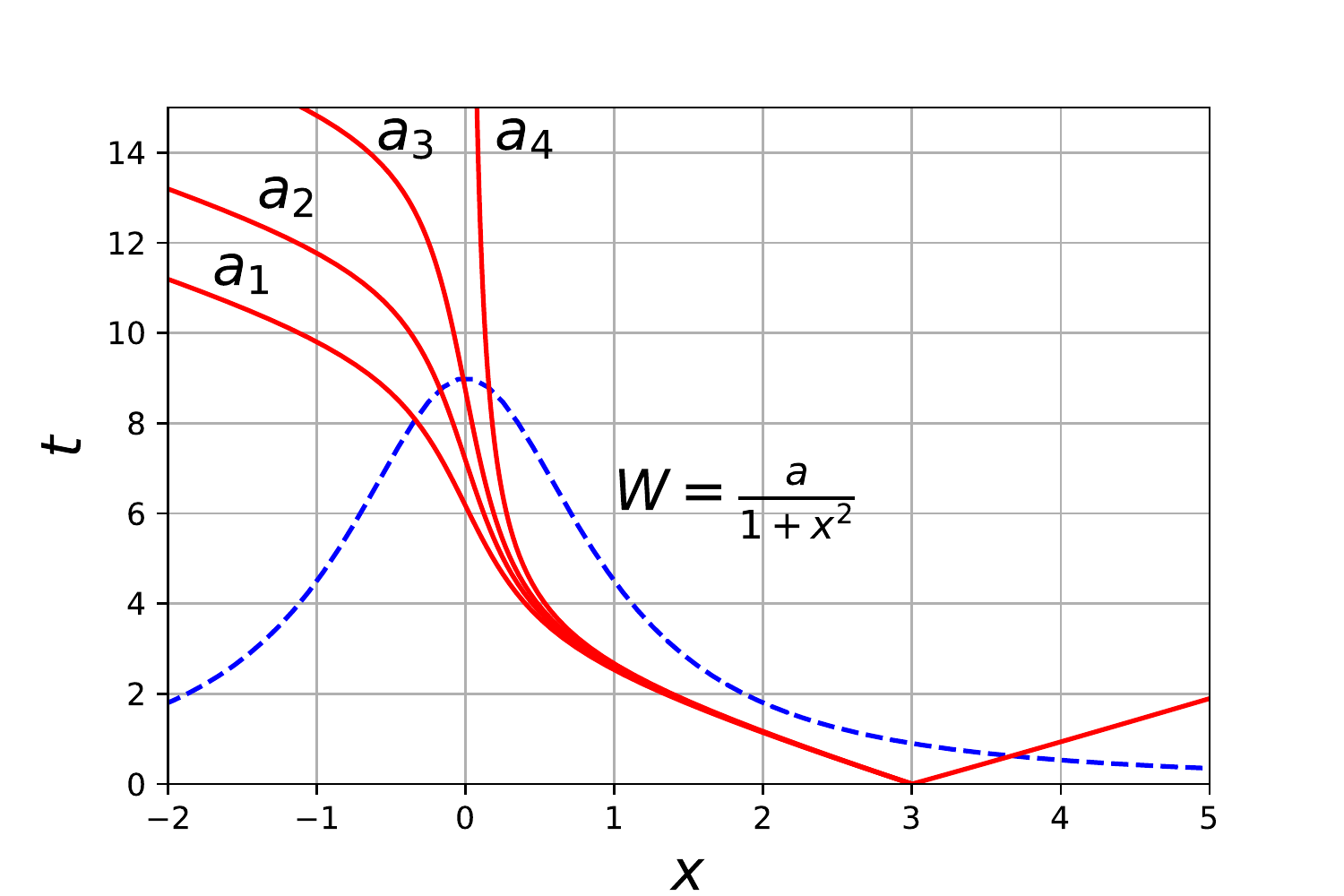}  
\caption{Geodesics (\ref{x+}) and (\ref{x-}) for some specific values of $a$. The blue (traced) curve is the aspect of the Zermelo vector (\ref{Zerm}), without any scale.
The red (continuous) lines corresponds to two sets of right and left-moving geodesics which start  at $t=0$ in  $x_0=-3$ and $x_0=3$. The right-moving geodesics $x_+$ are rather insensitive to the value of $a$. On the other hand, the  left-moving ones that cross the maximum of $W$ at $x=0$ reveal great 
sensitivity, mainly for the cases with $a$ close to 1. The depicted curves correspond to the following values of $a$: $a_1=0.85$, $a_2=0.9$, $a_3=0.94$, and $a_4=0.999$. The existence of a Killing vector in $x=0$ for $a\to 1$ is denounced by the behavior of the left-moving geodesics which start  in $x>0$, see the main text. 
}
\label{Fig1}
\end{figure}
it corresponds to a flume moving to the right-handed direction, with a smooth and non-homogeneous
 velocity attaining its maximum $W=a$ at the origin, which might be caused, for instance, due the presence
 of a smooth obstacle in the channel. We could also add a positive constant to $W$ which would correspond to the flume velocity far from the origin, but for our purposes here this constant is irrelevant and  we set it to zero, without any loss of generality.   For this choice of $W$, equations (\ref{wavefront})
 can be exactly solved as
 \begin{equation}
 \label{x+}
t=x_+ - x_0 - \frac{a}{\sqrt{1+a}}\left( \arctan \frac{x_+}{\sqrt{1+a}}- \arctan \frac{x_0}{\sqrt{1+a}}\right)
 \end{equation}
and
 \begin{equation}
 \label{x-}
 t=      x_0 -x_-  + \frac{a}{\sqrt{1-a}}\left(  \arctan \frac{x_0}{\sqrt{1-a}} - \arctan \frac{x_-}{\sqrt{1-a}}\right)  ,
 \end{equation}
 where we assume, for sake of simplicity and also without loss of generality, that the wavefront started at $t=0$ in $x=x_0$. From (\ref{x+}) and (\ref{x-}), we can draw a $(x,t)$ diagram for the geodesics, see Fig. \ref{Fig1}.
The behavior of the right ($x_+$) and left-moving ($x_-$) geodesics depends  strictly on the value of $x_0$. For $x_0>0$, the right moving geodesics depart  from the maximum of $W$ located at $x=0$ and are rather insensitive on the value of the constant $a$. Exactly the same occurs for the left-moving geodesics starting at $x_0<0$. The situation for the geodesics   crossing $x=0$ is completely different. The left-moving ones starting at $x_0>0$ cross $x=0$ ``against'' the Zermelo vector $W$ and depict a strong sensibility on $a$. In particular, for $a$ very close to 1, they tend  to stay close to $x=0$ for large time intervals. On the other hand, the right-moving geodesics  which started in $x_0<0$, and move in the same direction of $W$, exhibit low sensitive on $a$, they cross $x=0$ without sensible deviations. 
We can understand such differences directly from the Randers metric for our case
\begin{equation}
F(x,y) = \frac{ (1+x^2)}{(1+x^2)^2-a^2}\left(\left(1+x^2 \right)|y|  -ay\right).
\end{equation}
Notice that for $y>0$, the metric near the origin $x=0$ reads
\begin{equation}
\label{h1}
F(x,y) = \frac{y}{1+a} + O(x)
\end{equation}
whereas
\begin{equation}
\label{h2}
F(x,-y) = -\frac{y}{1-a} + O(x).
\end{equation}
It is clear that we will have for $a\to 1$ a manifestation of a Killing horizon in the Randers space, a hypersurface acting as an one-direction membrane, {\em i.e.} an hypersurface which can be crossed only in one direction. We will return to this important point in the last section.   
 
 \subsection{Wavefronts in vortices}  
 The one-dimensional flows of the first example are not sufficient to appreciate 
 all the subtleties   of the Finslerian analysis of the wavefronts. Flows involving
 vortices are very good candidates for our study, since besides of being intrinsically
 higher dimensional, they are indeed important from the experimental point of view
 in analogue gravity, see \cite{A5,A6,A7,A8} for some recent results. We will consider here the simplest possible vortex
 configuration:
 a fluid in 
 a long cylindrical tank  $M$    of radius $ R>0$. We will assume cylindrical symmetry, so  the vertical direction can be neglected and we are left with an effective  two-dimensional spatial problem.
  The pertinent manifold
 for our flow will be   
  \begin{equation}
    M=\left\{(x^1,x^2 )\in \RR^{2}\ :\  {(x^1)}^2+{(x^2)}^2 \le R^2\right\}. 
  \end{equation} 
  It is important to stress that our manifold in this case has a boundary $\partial M$ and that some boundary
  conditions will be needed for wavefronts and geodesics reaching $\partial M$ .   
 The associated 
  Randers space   will be $(M,h)$, where 
  $ h $ is the usual Euclidean two-dimensional metric, with the Zermelo vector field $W = (w^1,w^2)$
corresponding to a rotation flow around the origin. If one wants to keep the cylindrical symmetry, the
more general  Zermelo vector field in this case will be of the type
$W = w(r)XA$, where $X= (x^1,x^2)$, $r^2 = XX^t$, $w(r)$ is a smooth function,  and
$A$ is the two-dimensional rotation generator matrix 
\begin{equation}\label{matrix}
A=\left( {\begin{array}{rr}
	0 & 1 \\
	-1 & 0 \\
	\end{array} } \right).
\end{equation}
The case of constant angular fluid velocity (rigid rotation) corresponds to $w=a$ constant, whereas
the constant tangential  velocity is $w=ar^{-1}$. Notice that 
the dynamical flow associated with such a vector field is  given by   $\varphi_W(t,X)=X{\rm Rot}_r(t)$, where 
\begin{equation}\label{rot}
{\rm Rot}_r(t)=
\left( {\begin{array}{rr}
	\cos (tw(r)) & \sin (tw(r)) \\
	-\sin (tw(r)) & \cos (tw(r)) \\
	\end{array} } \right).
\end{equation}
The Randers metric (\ref{Randers1}) in this case is given by
\begin{equation}
\label{Rand2}
  F(X,Y) = \frac{\left|WY^t \right|}{1 - WW^t}\sqrt{1+\left(1 - WW^t\right)\frac{YY^t}{(WY^t)^2}}
    - \frac{WY^t  }{1 - WW^t}   ,  
\end{equation}
where $Y = (y_1,y_2)\in TM$ is an arbitrary vector and
\begin{equation}
 \lambda = 1 - WW^t = 1 -r^2w^2 ,
\end{equation}
from where we have the restriction $\max | rw    |< 1$. Of course, we have also assumed
$ WY^t\ne 0$. 
 Notice that (\ref{Rand2}) 
clearly resembles the behavior of the Randers metric (\ref{h1}) and (\ref{h2}) of the previous
example. If we have $rw   $ close to 1 for some $r=r_0$, we will have in the neighborhood of this hypersurface
\begin{equation}
 F(X,Y) = \frac{ \left|WY^t \right| - WY^t  }{1 - WW^t} + \frac{YY^t}{2| WY^t | } + O(r-r_0),
\end{equation}
and that it is clear that
for $WY^t > 0$ (corresponding to the vector $Y$ pointing in the same direction of the Zermelo ``wind'' $W$),
the metric is insensitive to the term $(1 -r^2w^2)^{-1}$, in sharp contrast with the situations where
$WY^t< 0$ (the vector $Y$ ``against''   $W$). The hypersurface $r=r_0$ in this case is not
exactly an horizon, since it could indeed be crossed in both direction by, for instance, having
$WY^t > 0$ but with ingoing and outgoing radial directions for $Y$. This kind of
 hypersurface mimics the main
properties of a black hole ergosphere, since it practically favors co-rotating directions for $Y$, as the
counter-rotating ones are strongly affected by the singularity arising from $(1 -r^2w^2)^{-1}$ 
for $rw\to 1$. 
An explicit example for $w(r)$ will help to illustrate such results. Before that, however,
let us notice that the cylindrical symmetry has an important consequence for the wavefronts. 
Let us consider the function 
 $f:M\to\RR$ with $f(x)= r^2  $. Since $df(W) =0 $, we have from Lemma (\ref{nablas}) that
 $F^2(X,\nabla f) = 4f^2$, 
 implying that $f$ is F-transnormal. Hence, by Proposition \ref{propag.water}, we have
 that the circumferences 
 $f^{-1}(t)=\{x\in M\ :\  r^2 = (x^1)^2+(x^2)^2 =t^2\}$
correspond to wavefronts in this Randers space. Of course, due to the cylindrical symmetry, 
such wavefronts originated form a source at the origin $(0,0)$ at $t=0$. The evaluation   of
the wavefronts emitted from an arbitrary points for general $w(r)$ is much more intricate and   involve the 
Finslerian geodesic flow. 

The explicit case we will discuss corresponds to the rigid rotation $w=a$. The main advantage
of this choice is that the Zermelo vector $W$ is a Killing vector of the Euclidean metric, and hence we can use Lemma \ref{geo} to obtain the Finsler geodesics explicitly as 
\begin{equation}
 \gamma_F(t)=\varphi_W(t,\gamma_h(t))=\gamma_h(t){\rm Rot}_a(t) ,
 \end{equation} 
 where $\gamma_h$ are the usual unit speed Euclidean geodesics and ${\rm Rot}_a(t)$ is 
 the matrix (\ref{rot}) for $w=a$. Since the Euclidean geodesics are  
 \begin{equation}
 \gamma_h(t) = (x^1_0,x^2_0) + tV ,
 \end{equation} 
 where $(x^1_0,x^2_0)\in M$ is an arbitrary point and $V$ a unit  vector, one can write
 \begin{equation}
 \label{rays}
 \left(\gamma_F(t) - (x^1_0,x^2_0) {\rm Rot}_a(t) \right) \left(\gamma_F(t) - (x^1_0,x^2_0) {\rm Rot}_a(t) \right)^t = t^2. 
 \end{equation}
 Recalling that the geodesics $\gamma_F$ are the wave rays of the wavefronts, we have from
 (\ref{rays}) that the wavefront emitted at $t=0$ from the point $(x^1_0,x^2_0)\in M$ is an
 expanding circle  with rotating center  $(x^1_0,x^2_0) {\rm Rot}_a(t)  $.
 Moreover, from (\ref{orto}) one can say that the geodesics $\gamma_F$ are orthogonal to each 
of these circles, as one can see by observing that  $F(\gamma')=1$ and that
 $\gamma_F'-W=\gamma_h'(t){\rm Rot}_a(t)$.
Fig. \ref{Fig2} depicts a typical example of wavefronts and geodesics for this system. 
\begin{figure}[tb]
\hspace{4cm}\includegraphics[scale=0.5]{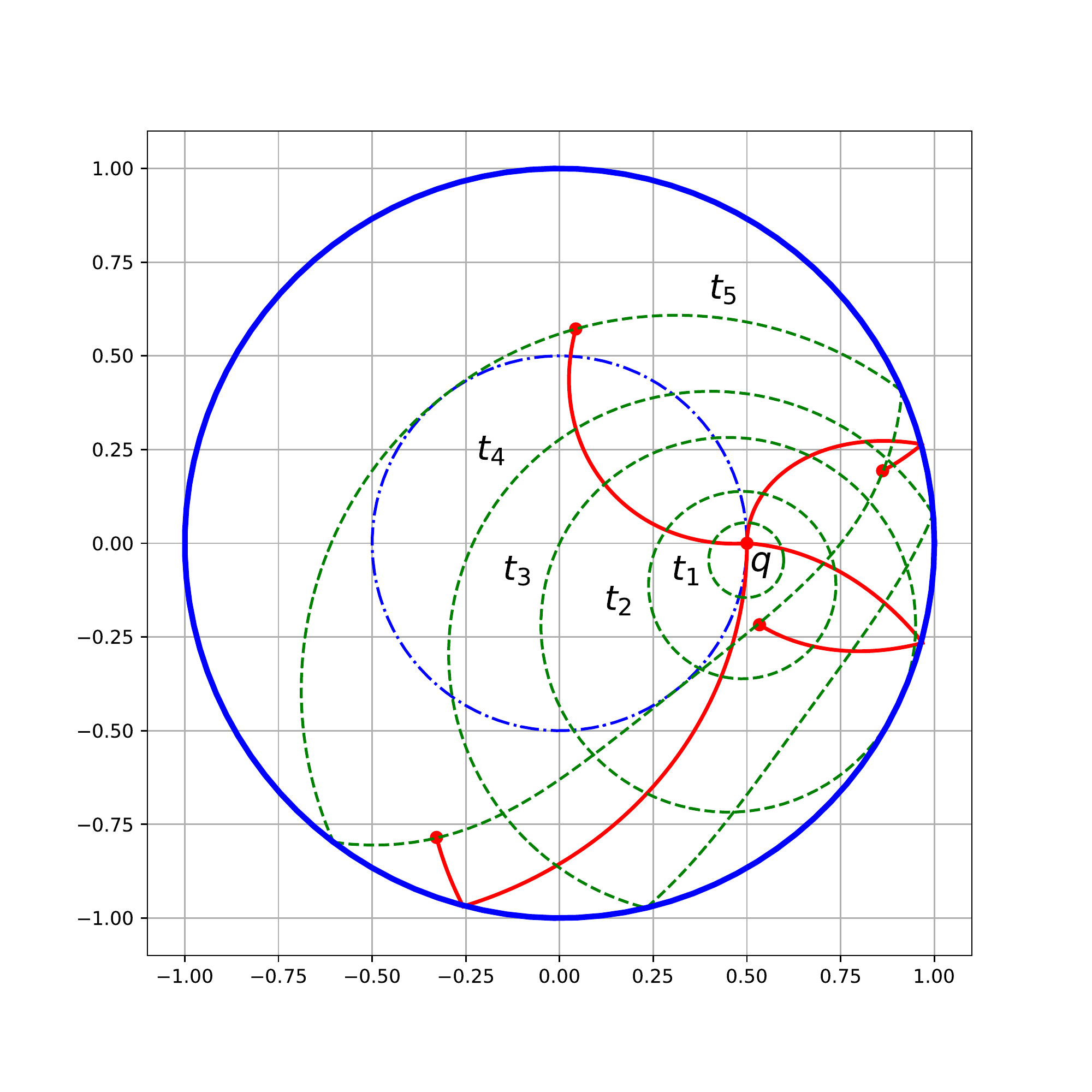} 
\caption{A wave pulse is emitted from the point $q$ at $t=0$ in a Randers space with
a rigid clockwise rotation vector field $W$. The red (solid) curves are some wave rays (Finsler geodesics),
plotted for $0\le t\le t_5$. The green (traced) curves are the wavesfronts at different times $0<t_1<t_2<t_3<t_4<t_5$. 
Perfect reflection on the boundary is assumed. Before the reflection, 
the wavefronts are circles with increasing radius and centers rotating on the blue (dot-traced) circle. After the reflection, they correspond to a circle segment and a caustic (see the text), which centers also rotate along the blue (dot-traced) circle.
   The wave rays are always orthogonal, with respect to the Finlerian structure, to the wavefronts. Due to
   the perfect reflection boundary condition, the wavefronts eventually evolve some singularities, see   the animations
available in the Supplementary Material. 
}
\label{Fig2}
\end{figure}
Since our manifold has a boundary $\partial M$, one needs to specify boundary conditions for geodesics and wavefronts on $\partial M$. We choose to impose perfect reflection on the boundary. The situation for the wavefronts is completely analogous, up to the rigid rotation, to the classical optical  problem of the reflection of spherical waves on a spherical mirror. In particular, after the reflection, our circular wavefronts will form a caustic, see \cite{caust} for a recent approach for the problem. The animations
available in the Supplementary Material depict the typical dynamics of wavefronts and geodesics with perfect reflection
boundary conditions in this Randers space. 
 Before the reflection occurs, 
the wavefronts are circles with increasing radius and which centers rotate 
around the origin $(0,0)$ of $M$ with constant angular velocity $a$. After the reflection, they will correspond to a circular segment  and a caustic, which centers also rotate around the origin. The caustic can be determined by the classical formula
\cite{caust}
\begin{equation}
C(s) = P + \frac{\left(s - |P-X|\right)\left(2(PX^t)P - R^2(P+X) \right)}{|P-X|R^2},
\end{equation}
where $P$ is the point of reflection on the boundary and $X$ the emitting point. The reflection takes place for $s > |P-X|$, {\em i.e.}, for a fixed $s > |P-X|$, $C(s)$ corresponds to the the reflected wavefront (the caustic). Due to the reflection boundary conditions, the wavefronts eventually will contract and give origin to some caustic singularities, see the Supplementary Material and \cite{sing} for further references on these phenomena. 

We are still left with the ergosphere properties of the hypersurface $ar\to 1$. They are illustrated in 
Fig. \ref{Fig3}.
\begin{figure}[t]
\hspace{4cm}\includegraphics[scale=0.5]{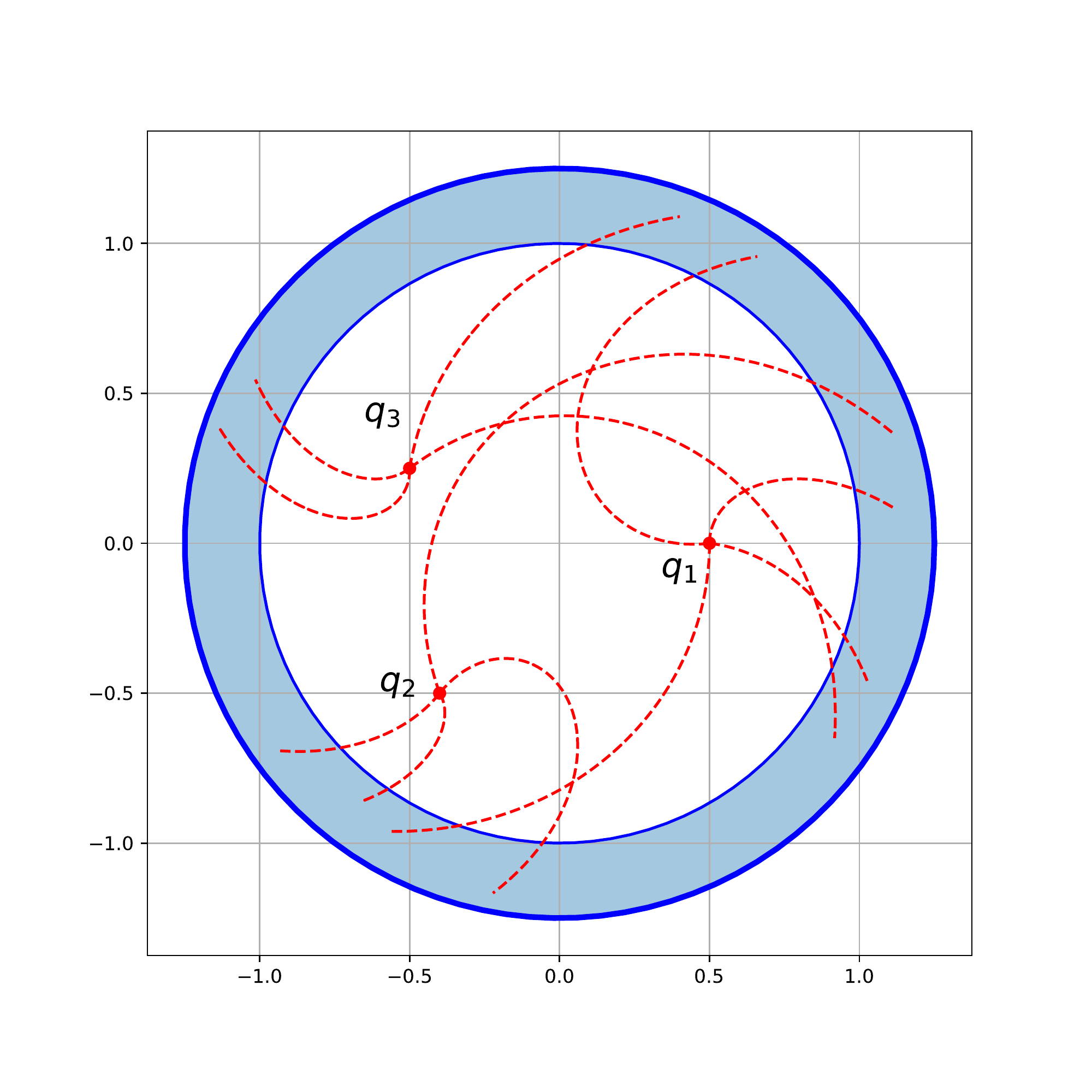} 
\caption{For the case of a  rigid clockwise rotating vector field $W$,
several Finsler geodesics, starting in three different points $q_1, q_2$, and $q_3$,  are depicted.
  The inner circle, which corresponds to $ar\to 1$, mimics an ergosphere. In the outermost
region (shadowed), which rigorously does not belong to our Randers space, no counter-rotating wave rays would be allowed.  No geodesic is allowed to reach the hypersurface $ar\to 1$ in the counter-rotating direction. 
}
\label{Fig3}
\end{figure}
Several geodesics starting in  different points are depicted. All geodesics reach the hypersurface $ar\to 1$ in the co-rotating direction (the Zermelo vector is a rigid clockwise rotation). In the outermost region,    which rigorously does not belong to our Randers space,
no counter-rotating wave rays would be allowed. This is qualitative equivalent to ergospheres in Kerr black holes, where no static observers are allowed since they are inexorably dragged and  co-rotate with the black-hole. 
The existence of ergospheres is intimately connected with superradiant scattering, a phenomenon already
described and detected in analogue models involving surface waves in vortex flows, see \cite{A6}, for
instance.

\section{Final remarks}
 
We have extended to the $n$-dimensional case a recent theorem due to  Markvorsen \cite{markvorsen2016finsler} establishing the validity of the Huygens' envelope principle 
in Finsler spaces. We then apply our results to two explicit cases motivated by recent
results in analogue gravity: the propagation of surface waves in flumes and vortex flows. The Finslerian description associated with the Fermat's principle of least time for the
wave propagation, in both cases, gives rise  to an underlying Randers geometry  and provides a  useful framework for the study of wave rays and wavefronts propagation. 
Interestingly, the spatial regions where $h(W,W)\to 1$ exhibits clearly the distinctive directional properties of
some spacetime causal  structures, namely a Killing horizon for the uni-dimensional flume and an ergosphere for the two-dimensional vortex. However, from the Randers space point of view, we are
confined by construction into the regions where the so-called mild Zermelo wind condition
$h(W,W) < 1$ holds, and hence the full description of these
issues would require abandoning the mild wind condition and the introduction of a Kropina-type metric for the region where $h(W,W) > 1$, see \cite{Kropina} for some recent mathematical
developments in this problem. In such a unified description, we could describe properly both sides
of the spatial hypersurface corresponding to $h(W,W) = 1$. This unified description of Randers
and Kropina spaces is still a quite recent program in Mathematics \cite{Kropina}. 

The analogue gravity examples provide a rather direct application for the Finslerian approach since 
the use of the Fermat's principle        manifestly originates an underlying Randers  geometry. However, the same results would also hold for General Relativity. Consider, for instance, the
Schwarzschild metric in the Gullstrand-Painlev\'e stationary coordinates
\begin{equation}
\label{GP}
ds^2 = -\left(1-\frac{2M}{r} \right)dt^2 + 2\sqrt{\frac{2M}{r}}dtdr + dr^2 + r^2 d\Omega^2,
\end{equation}
where $d\Omega^2$ stands for the usual metric on the unit sphere. Such a metric is, indeed, the
starting point for several hydrodynamic analogies, see \cite{river}, for instance. Ignoring
the angular variables, the null geodesics
of (\ref{GP}) are such that
\begin{equation}
dt = F(r,dr)  = \frac{|dr| +\sqrt{\frac{2M}{r}}dr}{1-\frac{2M}{r}},
\end{equation}
and this is precisely  a Randers metric of the type (\ref{geod}) with Zermelo vector $W= - \sqrt{\frac{2M}{r}}$. Exactly as in the flume case, we have two qualitative different behaviors for ingoing ($dr = dr_-<0$) and outgoing  ($dr=dr_+>0$) null rays, namely
\begin{equation}
\label{diverg}
dt = \frac{dr_\pm}{1 \mp \sqrt{\frac{2M}{r}}}.
\end{equation}
The directional properties of such Randers metric indicate  the presence of a horizon at $r=2M$, since ingoing null rays can cross it smoothly, while outgoing rays experiment  a metric divergence. It is hardly a surprise that Finsler
geometry turns out to be relevant for these directional properties of a spacetime causal structure.
In fact, some recent mathematical results \cite{M1,M2} show that most of causality results are also
valid in a Finslerian framework, under rather weak regularity hypotheses. However, 
the application of Finsler geometry in physical studies of causal structures is still a rather incipient program. Dropping the mild wind condition, which in this case
should allow for a  unified description for the exterior and interior region of the black hole (\ref{GP}), and the
study of the Finslerian curvatures associated to the divergence in (\ref{diverg}), should be the first 
steps
towards a physical Finslerian description of spacetime causal structures. These topics are now under investigation. 

\section*{Supplementary material}

The animations available as  Supplementary material
  at \cite{animation} 
 show the continuous time evolution of the wavefronts and
geodesics of Fig. \ref{Fig2}. One can appreciate the eventual formation of singularities in the caustic associated with the reflection of the wavefronts in the boundary $\partial M$.

\section*{Acknowledgment}
The authors 
acknowledge the financial support of CNPq, CAPES,    and FAPESP (Grant 2013/09357-9).
They also 
wish to thank M.M. Alexandrino, B.O. Alves,   M.A. Javaloyes, and E. Minguzzi for
enlightening discussions.

\section*{References}

\end{document}